\newtheorem{defn}{Definition}
\newtheorem{proposition}[defn]{Proposition}
\newtheorem{lemma}[defn]{Lemma}
\newtheorem{theorem}[defn]{Theorem}
\newtheorem{corollary}[defn]{Corollary}
\newtheorem*{obs}{Empirical Observation}
\newcommand*{\nbset}[1]{\mathbb #1}
\newcommand*{\op}[1]{\mathcal #1}
\DeclareMathOperator{\ord}{ord}
\newcommandx*{\suite}[3][2=n,3=n]{%
  \def\a{#3}                      %
  \def\nth{}                      %
  \ifx\a\nth                      %
  \def\res{\left(#1\right)_{#2\geq 0}}%
  \else                           %
  \def\res{\left(#1_{#3}\right)_{#2\geq 0}} %
  \fi%
  \res{}%
}
\DeclareMathOperator{\val}{val}
\def\lc{\operatorname{lc}}
\def\denom{\operatorname{den}}
\def\numer{\operatorname{numer}}
\DeclareMathOperator*{\K}{\mathcal K}
\newcommand*\cf[4]{\K_{#1}^{#2}\frac {#3}{#4}}
\def\paragraph#1{\smallskip\noindent{\bf #1.} }
\begin{document}

\title{Formulas for Continued Fractions\\An Automated Guess and Prove Approach}

\author{Sébastien Maulat}
\address{{ÉNS de Lyon}, LIP (U. Lyon, CNRS, ENS Lyon, UCBL), France}
\email{Sebastien.Maulat@ens-lyon.fr}

\author{Bruno Salvy}
\address{Inria, LIP (U. Lyon, CNRS, ENS Lyon, UCBL), France}
\email{Bruno.Salvy@inria.fr}

\begin{abstract}
We describe a simple method that produces automatically closed forms for the coefficients of continued fractions expansions of a large number of special functions. The function is specified by a non-linear differential equation and initial conditions. This is used to generate the first few coefficients and from there a conjectured formula. This formula is then proved automatically thanks to a linear recurrence satisfied by some remainder terms. Extensive experiments show that this simple approach and its straightforward generalization to difference and~$q$-difference equations capture a large part of the formulas in the literature on continued fractions.
\end{abstract}
\maketitle

\section{Introduction}
\label{sec:intro}

Continued fractions are well known for their approximation properties, their use in acceleration of convergence and analytic continuation, as well as their application in proofs of irrationality. Any formal power series can be converted into a \emph{corresponding} continued fraction (C-fraction)
\begin{equation}\label{eq:cf-generalform}
a_0+\cfrac{a_1(z)}{1+\cfrac{a_2(z)}{1+\cfrac{a_3(z)}{1+\dotsb}}}
\end{equation}
classically denoted~$a_0+\cf{m=1}{\infty}{a_m(z)}{1}$ or~$[a_0,a_1(z),a_2(z),\dotsb]$, where~$a_0$ is a constant and~$a_i(z)$ are nonconstant monomials for~$i>0$ that are called \emph{partial numerators}. In the frequent case when all the exponents are equal to~1, the C-fraction is called \emph{regular}. Truncating a continued fraction after its~$n$th term gives a rational function which is called its~$n$th \emph{convergent}. There is a one-to-one correspondence between power series and C-fractions. It is easily computed by a sequence of extractions of the constant coefficient, division by the variable and inversion. This conversion is available in the major computer algebra systems.

In several isolated cases, the coefficients~$a_m$ are known to possess a closed form, as in the following formula for~$\exp(z)$:
\[\exp(z)=1+\cfrac{z}{1-\cfrac{z/(2\cdot 1)}{1+\cfrac{z/(2\cdot 3)}{1-\cfrac{z/(2\cdot 3)}{1+\cfrac{z/(2\cdot5)}{1+\dotsb}}}}},\]
or more compactly
\begin{equation}\label{eq:cf-exp}
a_1=z,\;\; a_{2k}=-z/(2(2k-1)),\;\; a_{2k+1}=z/(2(2k+1)).
\end{equation}
Such formulas are the object of this work. A number of them are listed in the classical handbook by Abramowitz and Stegun~\cite{AbramowitzStegun1992}, or in its successor~\cite{OlverLozierBoisvertClark2010} and the most extensive list to date is the recent handbook by Cuyt \emph{et alii}~\cite{CuytPetersenVerdonkWaadelandJones2008}. Our aim is to derive many of these formulas automatically, starting from a description of the function to be expanded in continued fraction.

We concentrate on functions that are given as solutions of ordinary differential equations with initial conditions (or difference or~$q$-difference equations, see \S\ref{sec:expe}). Our approach can be summarized as follows. First, the differential equation and initial conditions are used to generate the first terms of the power series expansion of the function. This power series is then converted into a continued fraction. The coefficients of this continued fraction are then ``guessed'' by variants of rational interpolation. When this guessing phase is successful, a new power series is defined by this guessed continued fraction expansion. It remains to show that this power series satisfies the differential equation (the initial conditions being correct by construction). The key point in this proof is Theorem~\ref{prop:main}, stating that the (properly normalized) evaluations of the differential equation on the successive convergents to the continued fraction satisfy a linear recurrence, that can be computed. In all cases, after an operation we call ``reduction of order'', this recurrence exhibits a growth in the valuations that is sufficient to conclude the proof. A surprisingly large proportion of known explicit continued fractions are thus obtained completely automatically.

Classically, a very effective method due to Gauss derives formulas for continued fractions starting from the contiguity of hypergeometric series. 
Specialization of the parameters then leads to formulas for elementary or special functions~\cite[\S6.1]{JonesThron1980}. This leads to explicit continued fraction expansions by recognizing the function to be expanded as a special case of a quotient of contiguous hypergeometric functions and then relying on a small table of such explicit formulas.
These quotients satisfy Riccati equations, so that they are covered by our approach, which is not limited to them (see \S\ref{sec:expe}) and proves more suited to the targetted application to the \emph{Dynamic Dictionary of Mathematical Functions}~\cite{BenoitChyzakDarrasseGerholdMezzarobbaSalvy2010}. This is an online encyclopedia of special functions, where the formulas are all generated by computer algebra algorithms from differential equations, in many cases along with a human-readable proof. In this context, it makes sense to avoid any table lookup and generate formulas and proofs for continued fractions directly from the differential equation.

The work closer to ours is the investigation by Chudnovsky and Chudnovsky~\cite{ChudnovskyChudnovsky1991}. They used computer algebra in the study of formulas for continued fractions. Their aim was to classify all functions possessing continued fractions with explicit formulas of various types and relating them to Painlevé transcendents. In contrast, we focus on \emph{one} function that is given as input, and heuristically produce a rational continued fraction expansion when possible.

This article is structured as follows. Section~\ref{sec:example} gives an overview of our method on the example of the tangent function. Next, Section~\ref{sec:reduction} presents a heuristic of independent interest that reduces the order of a recurrence given initial conditions. This plays a crucial role in the proving phase of our method. Section~\ref{sec:guessing} is a brief account of what guessing means in this context, while Section~\ref{sec:proving} is the heart of this work and shows how proofs are achieved automatically. Finally, Section~\ref{sec:expe} presents experiments with this approach.

\section{Detailed Example: tan}\label{sec:example}

\noindent The tangent function can be defined  
by the Riccati equation
\begin{equation}
  \label{eq:de_tan}
  y'=1+y^2,\quad y(0)=0.
\end{equation}
The first 15 coefficients of the unique power series solution are easily computed from the differential equation~(see Proposition~\ref{lemma:val-increase} below for existence and uniqueness). A conversion into a continued fraction gives the coefficients 
\[[0,z,-z^2/3,-z^2/15,-z^2/35,-z^2/63,-z^2/99,-z^2/143].\]
The general formula can be deduced from these first terms by rational interpolation, which leads automatically to the (so far conjectural) formula 
\begin{equation}\label{eq:cfrac-tan}
a_1(z):=z; \quad a_n(z) := {-z^2}/({(2n-3)(2n-1)}), n>1.
\end{equation}

Next, we turn to the automatic proof of this formula. The strategy is to prove that the sequence of rational functions defined by truncating~\eqref{eq:cfrac-tan} after the~$n$th term for~$n=1,2,\dots$ converges to the formal power series solution to the differential equation~\eqref{eq:de_tan}. More precisely, let~$f_n$ be defined by
\[f_n=\frac{P_n}{Q_n}:=[0,z,-z^2/3,\dots,-z^2/((2n-3)(2n-1))],\]
where the rightmost term denotes the finite continued fraction. Then the proof will be completed by showing that~\\$\val (\tan -f_n)\rightarrow\infty$ as~$n\rightarrow\infty$, where~$\val$ denotes the valuation of a power series:
\[ \val\big(\sum_{i\geq 0}c_iz^i\big) := \min \{i\geq 0\mid c_i\neq 0\}, \]
with the convention~$\val(0)=\infty$.
Proposition~\ref{lemma:val-increase} below shows that it is sufficient to prove that~$\val(\mathcal D(f_n))\to \infty$, where\\$\mathcal{D}(f_n):=f_n'-1-f_n^2$.

It is classical that the numerator and denominator of the convergents of a continued fraction are related to the coefficients~$a_n$ through a linear recurrence:
\begin{equation}
  \label{eq:recPQ}
\begin{aligned}
    (P_{-1},P_0) &= (1,0),&  P_n&=P_{n-1}+a_nP_{n-2},& n&\geq 1,\\
    (Q_{-1},Q_0) &= (0,1),&  Q_n&=Q_{n-1}+a_nQ_{n-2},& n&\geq 1.
\end{aligned}
\end{equation}

In view of~\eqref{eq:cfrac-tan}, it follows that for all~$n\ge0$, $Q_n(0)=1$. Thus, the valuation of~$\mathcal{D}(f_n)$ is that of its numerator
\begin{equation}\label{eq:defH}
H_n:=P_n'Q_n-Q_n^2-P_n^2-P_nQ'_n.
\end{equation}
Using~\eqref{eq:recPQ} to rewrite~$P_{n+k}$ and~$Q_{n+k}$ in terms of~$P_n,P_{n+1}$, $Q_n,Q_{n+1}$, it follows that any shift~$H_{n+k}$ ($k\in\{0,1,2,\dotsc\}$) can be rewritten as a linear combination of
\[P_{n+i}'Q_{n+j}, P_{n+i}Q_{n+j}', P_{n+i}P_{n+j}, Q_{n+i}Q_{n+j},\] for~$i$ and~$j$ in~$\{0,1\}$. There are finitely many such terms, which implies that a linear dependency between~$H_n,H_{n+1},\dots$ (ie, a linear recurrence for~$H_n$) can be computed directly from~\eqref{eq:recPQ} by linear algebra. This computation produces a linear recurrence of order~4:
\begin{multline}\label{eq:bigrec}
( 2n+7 ) {z}^{8}H_n 
-{z}^{4} ( 2n+7 )  ( 2n+3 ) ^{2} ( 2n+1 ) ^{2}H_{n+1}\\
+2{z}^{2} ( 2n+5 )  ( 2n+3 ) ^{2} ( 2n+1 ) ^{2} ( 4{n}^{2}-{z}^{2}+20n+21 ) H_{n+2}\\
 - ( 2n+5 ) ^{2} ( 2n+1 ) ^{2} ( 2n+7 ) ^{2} ( 2n+3 ) ^{3}H_{n+3} \\
 + ( 2n+5 ) ^{2} ( 2n+1 ) ^{2} ( 2n+7 ) ^{2} ( 2n+3 ) ^{3}H_{n+4}=0.
\end{multline}
This recurrence is satisfied by all sequences defined by~\eqref{eq:defH}, with~$P_n$ and~$Q_n$ arbitrary solutions of~\eqref{eq:recPQ}. Using the actual sequences~$P_n$ and~$Q_n$ provided by the continued fraction gives the first values of~$H_n$:
\[-1,-z^2,-\frac{z^4}{9},-\frac{z^6}{225},-\frac{z^8}{11025},-\frac{z^{10}}{893025}.\]
From there, automatic guessing again suggests the following simpler recurrence for~$H_n$:
\begin{equation}\label{eq:small_rec_H}
  (2n+1)^2H_{n+1}-z^2H_n=0.
\end{equation}
And again, this recurrence admits of an automatic proof: the right Euclidean division of the fourth order recurrence operator from~\eqref{eq:bigrec} by this first order one has a remainder equal to~0.
 This shows that the solution of~\eqref{eq:small_rec_H} with the initial conditions given above coincides with the solution of~\eqref{eq:bigrec} with the same initial conditions, and thus the numerator of~$\mathcal{D}(f_n)$ satisfies~\eqref{eq:small_rec_H}. On this last recurrence, the increase of the valuation with~$n$ is clear and this concludes the proof that~$f_n$ converges to~$\tan$ and thus that the power series defined by the continued fraction~\eqref{eq:cfrac-tan} is that of~$\tan$.

In summary, starting with the differential equation~\eqref{eq:de_tan}, this method produces and proves automatically the general term of the famous continued fraction
\[\tan z = \cfrac{z}{1-\cfrac{z^2/3}{1-\cfrac{\ddots}{1-\cfrac{z^2/((2n-3)(2n-1))}{\ddots}}}}\]
that was the basis of Lambert's proof that~$\pi$ is irrational.

\section{Reduction of order by Guess and Prove}\label{sec:reduction}
The transformation of the large recurrence~\eqref{eq:bigrec} into the shorter one~\eqref{eq:small_rec_H} makes it possible to prove automatically that the valuations~$\val H_n(z)$ increase with~$n$. This transformation turns out to play a role in most of the examples dealt with in our experiments. It is actually of more general interest: the closure properties enjoyed by the class of D-finite series or P-recursive sequences give rise to operators satisfied by products or sums of zeroes of such operators~\cite{Stanley1999,SalvyZimmermann1994}. These operators annihilate all possible cases and are potentially of large size, while operators of smaller order may exist for the specific solution of interest. Such an operator may be a right factor of the large one and could be searched for by factoring, but this is made difficult by the potentially infinite number of distinct factorizations~\cite{Tsarev1996}.

\paragraph{Sequences}
Let~$\op A$ be a recurrence operator with polynomial coefficients in $n$, of order denoted by~$\ord \op A$ and leading coefficient~$\lc(\op A)(n)$.
A sequence~$\left(u_n\right)_{n\geq 0}$, abbreviated $\left(u_n\right)$, is said to be \emph{defined} by the operator~$\op A$ and the initial conditions~$\mathcal{K}=\left(u_i\right)_{i\in\mathcal{I}}$, when the value~$u_n$ is given by~$\mathcal{K}$ for~$n\in\mathcal{I}$ and by the recurrence operator evaluated at~$n-\ord \op A$ otherwise. Note that the set~$\mathcal{I}$ must contain
\begin{equation*}
  \{0,\dots,\ord \op A-1\}\cup \{i\in\mathbb{N}\mid \lc(\op A)(i-\ord \op A)=0\}.
\end{equation*}
  
\paragraph{Algorithm}
We now detail an efficient heuristic approach finding such right factors, whose complexity is controlled with respect to the size of the large operator.
The idea is to exploit the \emph{initial values} of the sequence by a ``guess and prove'' approach.
This is described in Algorithm~\ref{reducerecorder}. This algorithm takes as input a linear recurrence operator~$\op A$ and initial values, as well as an upper bound~$N$ on the number of coefficients used to find a right factor. It is described here in the case of recurrence operators; similar variants apply to differential or~$q$-difference cases. %

The search for a smaller order operator is performed in two main steps, ``guessing'' and ``proving''. 
First, the input recurrence of order~$M$ and its initial values are used to compute the first~$N$ terms of the sequence.
Next, these~$N$ terms are used to ``guess'' a linear recurrence. This is done by linear algebra: we search successively for the existence of a linear recurrence operator~$\op G$ of order~$1,2,\dots,M$ with polynomial coefficients of degrees such that the sum of the numbers of undetermined coefficients of the recurrence is smaller than~$N$. The structure of this linear algebra problem is exploited by computing matrix rational interpolants~\cite{BeckermannLabahn1997} (in the differential case, Hermite-Padé approximants are used~\cite{BeckermannLabahn1994}).

When~$N$ is sufficiently large, this linear algebra phase is always successful, since it can reconstruct~$\op{A}$. The next step is to prove that the recurrence~$\op G$ obtained from the first~$N$ terms of~$\left(u_n\right)$ defines the same sequence for all~$n$. The operator~$\op G$ is not necessarily a right factor of~$\op A$, but could be merely a left multiple of such a right factor, the factor itself being too large to be found with~$N$ terms only. This is related to the typical shape of the order-degree curve~\cite{ChenJaroschekKauersSinger2013}. Thus the algorithm next computes the greatest common right divisor of~$\op G$ and~$\op A$ and its {numerator}~$\op R$, 
obtained by left-multiplication with the least common multiple of the  denominators of the coefficients. 

At this stage, the algorithm has produced a right factor~$\op R$ of~$\op A$. It is then associated initial conditions~$\left(u_i\right)_{i\in\mathcal{J}}$,
with which $\op R$ defines a sequence~$\suite v[n][n]{}$. We now prove that if 
$v_n=u_n$ for~$n\in\mathcal{J}+\{1,\dots,\ord \op A-\ord\op R\}$, then $v_n=u_n$ for all $n$. The induction on $n$ is based on the following.
\begin{lemma} If $u_n=v_n$ for $n\le i+\ord A-1$ and
$i+j\notin\mathcal J$ for all $j\in\{\ord \op R,\dots,\ord \op A\}$ then $u_{i+\ord A}=v_{i+\ord A}$.
\end{lemma}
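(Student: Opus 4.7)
The plan is to exploit the factorization $\op A = \op Q \op R$ with $\ord \op Q = \ord \op A - \ord \op R$, which exists because $\op R$ is a right factor of $\op A$ (guaranteed by the algorithm's construction as the numerator of a gcrd). Applying this factorization at index $i$ to the sequence $v$ gives
\begin{equation*}
\op A(v)_i \;=\; \op Q(\op R(v))_i \;=\; \sum_{j=0}^{\ord \op A - \ord \op R} q_j(i)\,\op R(v)_{i+j},
\end{equation*}
so the first step is to verify that every summand on the right-hand side vanishes.

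By the definition of ``sequence defined by $\op R$ with initial conditions on $\mathcal J$'', the identity $\op R(v)_m = 0$ holds exactly when $m + \ord \op R \notin \mathcal J$; otherwise $v_{m+\ord \op R}$ is fixed by the initial values and is not constrained to satisfy the recurrence. The hypothesis $i + \ell \notin \mathcal J$ for all $\ell \in \{\ord \op R, \dots, \ord \op A\}$ is precisely what is needed to force $\op R(v)_{i+j} = 0$ for every $j \in \{0, \dots, \ord \op A - \ord \op R\}$, and hence to conclude $\op A(v)_i = 0$. Since $u$ is defined by $\op A$, one also has $\op A(u)_i = 0$ at the same index. Subtracting the two vanishing relations and invoking the induction hypothesis $u_n = v_n$ for $n \le i + \ord \op A - 1$, all terms collapse except the leading one, leaving
\begin{equation*}
\lc(\op A)(i)\,(u_{i + \ord \op A} - v_{i + \ord \op A}) = 0.
\end{equation*}
The very definition of $\mathcal I$ ensures $\lc(\op A)(i) \neq 0$ whenever $i + \ord \op A \notin \mathcal I$, so dividing yields the conclusion.

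The main obstacle is the delicate matching of hypotheses rather than any computation: one must line up the condition on $\mathcal J$ with exactly what is required to chain $\op R(v) = 0$ through the left cofactor $\op Q$. The rest is bookkeeping on which indices are initial conditions versus recurrence-computed, and on the implicit requirement that the range of induction keep $i + \ord \op A$ outside $\mathcal I$ (so that $u_{i+\ord \op A}$ is genuinely defined by the recurrence and the leading coefficient does not vanish).
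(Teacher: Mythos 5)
Your overall route is the paper's own: factor $\op A=\op Q\op R$, use the hypothesis on $\mathcal J$ to force $\op R(v)_{i+j}=0$ for $j=0,\dots,\ord\op A-\ord\op R$ (your index bookkeeping here, matching $j+\ord\op R$ against $\{\ord\op R,\dots,\ord\op A\}$, is exactly right), deduce $\op A(v)_i=0$, and compare with $\op A(u)_i=0$ using the induction hypothesis and the nonvanishing of $\lc(\op A)(i)$ --- which does follow from the stated hypothesis, since $j=\ord\op A$ gives $i+\ord\op A\notin\mathcal J\supseteq\mathcal I$, so this is not merely an ``implicit requirement'' as you suggest.

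The genuine gap is in the displayed identity $\op A(v)_i=\sum_j q_j(i)\,\op R(v)_{i+j}$. Because $\op R$ is only the \emph{numerator} of a greatest common right divisor, the left cofactor $\op Q$ has coefficients in $\mathbb{K}(n)$, not $\mathbb{K}[n]$, and the evaluation at $n=i$ is meaningless if some $q_j$ has a pole there: clearing denominators only yields $d(i)\,\op A(v)_i=0$, which is vacuous when $d(i)=0$, and ``$q_j(i)\cdot 0$'' is not $0$ when $q_j(i)$ is infinite. The paper devotes its second lemma to exactly this point, proving
$\denom(\op Q)^{-1}(0)\subseteq\lc(\op R)^{-1}(0)+\{0,-1,\dots,-\ord\op Q\}$,
so that the poles of $\op Q$ sit at shifts of the zeros of $\lc(\op R)$; since the algorithm places $\bigl(\lc(\op R)^{-1}(0)-\ord\op R\bigr)\cap\mathbb{N}$ inside $\mathcal J$, the hypothesis $i+j\notin\mathcal J$ is precisely what guarantees the coefficients $q_j(i)$ are finite. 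You need to supply this localization of the poles of $\op Q$ (or cite it) before evaluating the factorization at $n=i$; without it the central step of your proof can fail at the very indices the lemma is designed to handle.
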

\begin{proof}
The sequences $\{S^j\op R\cdot v_n\}_{\ord \op R\leq j\leq \ord \op A}$ all cancel at $n=i$.
The application of $\op A$ at $\suite v$ is a linear combination of them with coefficients that are finite, as shown in the next lemma, so that $\op A\cdot v_n$ is 0 at $n=i$.
\end{proof}
\begin{lemma}
  Let~$\op A$ and~$\op C$ be recurrence operators with polynomial coefficients, satisfying~$\op A=\op B \op C$ where~$\op B$ has rational coefficients. Then the denominator~$\denom\op B$ satisfies:
  \[ \denom(\op B)^{-1}(0)
  \subseteq
  \lc(\op C)^{-1}(0)
  +\{0,-1,\dots,-\ord \op B\} \]
  where addition denotes the sumset.
\end{lemma}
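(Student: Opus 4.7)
I would prove this by analyzing the right Euclidean division that produces $\op B$ from the relation $\op A = \op B\op C$, tracking how denominators accumulate step by step as one recovers the coefficients of $\op B$ one at a time.

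Write $\op A = \sum_{i=0}^M a_i(n) S^i$ and $\op C = \sum_{j=0}^P c_j(n) S^j$ with polynomial coefficients, and $\op B = \sum_{k=0}^Q b_k(n) S^k$, so that $M = P + Q$ and $Q = \ord\op B$. Using the Ore relation $S \cdot f(n) = f(n+1)\cdot S$, the coefficient of $S^m$ in $\op B\op C$ equals $\sum_{k+j=m} b_k(n)\, c_j(n+k)$. For $m=M$ the unique contribution gives $b_Q(n) = a_M(n)/c_P(n+Q)$, so $\denom(b_Q)$ divides $c_P(n+Q)$, whose zero set is $\lc(\op C)^{-1}(0) + \{-Q\}$.

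The heart of the argument is a downward induction on $k$ showing that $\denom(b_{Q-k})$ divides $\prod_{\ell=0}^{k} c_P(n+Q-\ell)$. Assuming this bound for indices greater than $Q-k$, matching the coefficient of $S^{M-k}$ in $\op A = \op B\op C$ isolates $b_{Q-k}(n)$ as
\[ b_{Q-k}(n) = \frac{a_{M-k}(n) - \sum_{\substack{k'+j = M-k \\ k' > Q-k}} b_{k'}(n)\, c_j(n+k')}{c_P(n+Q-k)}, \]
and only one new factor, $c_P(n+Q-k)$, appears in the denominator on top of those already present in the higher $b_{k'}$, which completes the induction step.

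Taking the union over $k = 0,\dots,Q$ of the zero sets of these denominators yields $\denom(\op B)^{-1}(0) \subseteq \lc(\op C)^{-1}(0) + \{0,-1,\dots,-Q\}$, which is the claim. The only real subtlety is the careful bookkeeping of how the shifts introduced by the powers of $S$ translate zeros of $\lc(\op C)$, but once the leading-term identity above is written out explicitly the induction is routine and there is no genuine obstacle.
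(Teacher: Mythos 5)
Your proof is correct and takes essentially the same approach as the paper: the paper's one-sentence proof says to ``follow the steps of a right Euclidean division'' of $\op A$ by $\op C$, and your downward coefficient-matching induction, which recovers $b_Q,\dots,b_0$ while dividing by a shifted copy of $\lc(\op C)$ at each step, is exactly that computation with the denominators tracked explicitly. No gaps.
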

\begin{proof}This is seen by following the steps of a right Euclidean division.
\end{proof}

\newfloat{algorithm}{tb}{lop}
\floatname{algorithm}{Algorithm}
\renewcommand{\algorithmicrequire}{\textbf{Input:}}
\renewcommand{\algorithmicensure}{\textbf{Output:}}
\newcommand{\mycmdstyle}{\textit}
\def\guessrec{\operatorname{guessrec}}
\def\firstterms{\operatorname{firstterms}}
\def\gcrd{\operatorname{gcd_{right}}}
\def\quotient{\operatorname{quo_{right}}}
\def\NULL{\operatorname{FAIL}}

\begin{algorithm}
\centering
\begin{minipage}{\linewidth}%
\flushleft
\begin{algorithmic}
  \REQUIRE$(\op A,\left( u_n \right)_{n\in \mathcal{I}})$ defining~$\suite u$, and~$N>0$.
  \ENSURE$(\op R,\left( u_n \right)_{n\in {\mathcal{J}}})$ defining~$\suite u$ {\it s.t.}~$\ord\op R\leq \ord\op A$.
  \STATE{$\mathcal{U} \gets (u_n)_{n=0,\dots,N-1}$, computed using~$\op A$ and~$\left( u_n \right)_{n\in {\mathcal{I}}}$}
  \STATE{$\op G\gets \guessrec(\mathcal{U})$}
  \IF{$\op G\neq \NULL$}%
  \STATE{$\op R\gets \numer\left(\gcrd(\op A, \op G)\right)$} 
  \STATE{${\mathcal{J}}\gets \mathcal{I} \cup \Big( \lc(\op R)^{-1}(0) -\ord \op R \Big) \cap \mathbb N$}
  \STATE{$\mathcal{V} \gets (v_n)_{n\in  \mathcal{J}+\{1,\dots,\ord \op A-\ord\op R\}}$, using~$(\op R,\left( u_n \right)_{n\in {\mathcal{J}}})$}
  \STATE{$\mathcal{U}' \gets (u_n)_{n\in  \mathcal{J}+\{1,\dots,\ord \op A-\ord\op R\}}$, using~$(\op A,\left( u_n \right)_{n\in {\mathcal{I}}})$}
  \algorithmicif~$\mathcal{U}'=\mathcal{V}$ \algorithmicthen~ \algorithmicreturn~ $(\op R,\left(u_n\right)_{n\in \mathcal{J}})$~\algorithmicendif
  \ENDIF
  \RETURN {$(\op A, \left(u_n\right)_{n\in \mathcal{I}})$}
\end{algorithmic}
\end{minipage}
\caption{Reduction of Order}
\label{reducerecorder}
\end{algorithm}

In practice, this algorithm is run for increasing values of~$N=4,8,16,\dots$ and stopped when either a factor is found or~$N$ is larger than the number of coefficients of~$\op A$. Note however that if~$\op A$ is not irreducible, then increasing~$N$ further is bound to find a nontrivial right factor.

\section{Guessing Continued Fractions}\label{sec:guessing}

The first step of our approach to continued fractions is the automatic discovery of formulas for the partial numerators~$a_k$.
This section is very short since this part of the computation is straightforward.

Starting from the differential equation, a first method would be to produce the first terms of the series expansion of the function, convert them into the first terms of the continued fraction and then use the method of the previous section to look for a linear recurrence of size bounded by the number of terms that have been computed. It turns out that in most of the known examples, 
explicit formulas are of rational form (see Section~\ref{sec:expe}).
We therefore concentrate on rational coefficients, or on interlacing of rational coefficients as in the example of the exponential function. This means that the ``guessing'' stage of our approach relies simply on rational interpolation, problem for which efficient algorithms are known through its relation to the extended Euclidean algorithm~\cite[\S5.7]{GathenGerhard2003}. Moreover, the degrees of the numerator and denominator are generally low, so that a few terms of the expansion are sufficient for the computation.

\section{Proving Continued Fractions\\for solutions of ordinary\\differential equations}\label{sec:proving}
\newcommand{\numers}{\suite {P_n(z)}[n][]}
\newcommand{\denoms}{\suite {Q_n(z)}[n][]}
\newcommand{\Hs}{\suite H[k][k]}
The proving phase is the heart of our work. It is also where most of the computational work takes place.
We consider first-order non-linear differential equations with rational coefficients, ie, $y'=p(y)$, with~$p\in \nbset Q(z)[Y]$ of degree~$d$. 
In particular, the case~$d\le 2$ corresponds to the classical Riccati equations that are ubiquitous in the study of continued fractions, due to their 
stability under linear fractional transformations of the unknown function~\cite[10.7]{BakerGraves-Morris1996}. Explicit solutions for restricted classes of equations have been provided by Euler and Lagrange and more recently by Khovanskii~\cite{Khovanskii1963}. 

Our procedure goes in the reverse direction. The continued fraction with explicit rational coefficients that was found in the previous stage defines a power series. The aim is to show that it is a solution of the differential equation. 

\subsection{Valuations}
The following proposition reduces the proof to that of the ultimate increase of a sequence of integers.
\begin{proposition}\label{lemma:val-increase}
Let~$F\in\mathbb{K}[[X,Y]]$ be a formal power series with coefficients in a field~$\mathbb{K}$ and let~$(f_n(X))$ be a sequence of power series in~$\mathbb{K}[[X]]$. Then the differential equation~$Y'=F(X,Y)$ with initial condition~$Y(0)=0$ admits a unique power series solution~$S(X)$. Moreover, the sequence~$(f_n(X))$ converges to~$S(X)$ (ie~$\val (f_n-S)\rightarrow\infty$) if and only if~$f_n(0)=0$ for~$n$ sufficiently large and~$\val(f_n'(X)-F(X,f_n(X)))\rightarrow\infty$.
\end{proposition}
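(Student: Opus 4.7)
The plan splits into three parts: existence and uniqueness of $S$, the easy $(\Rightarrow)$ direction, and the main $(\Leftarrow)$ direction.

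For existence and uniqueness, I would write $S(X) = \sum_{i\ge 1} s_i X^i$ (using $S(0)=0$) and extract the coefficient of $X^{i-1}$ from $Y'=F(X,Y)$. Because $S$ has positive valuation, only finitely many monomials of $F$ contribute to this coefficient, and the contribution depends polynomially on $s_1,\dots,s_{i-1}$. The resulting relation $i\,s_i = \Phi_i(s_1,\dots,s_{i-1})$ determines $s_i$ uniquely, assuming $\mathbb{K}$ has characteristic zero.

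For both directions of the equivalence, the key tool is a formal Lipschitz-type estimate. Writing $F(X,Y) = \sum a_{k,\ell}(X)\,Y^\ell$, the factorization $Y^\ell - Z^\ell = (Y-Z)(Y^{\ell-1}+\dots+Z^{\ell-1})$ produces a series $H \in \mathbb{K}[[X,Y,Z]]$ with $F(X,Y) - F(X,Z) = (Y-Z)\,H(X,Y,Z)$. Substituting two series $f,g$ of positive valuation then yields
\[\val(F(X,f) - F(X,g)) \ge \val(f-g).\]
The $(\Rightarrow)$ direction follows immediately: $\val(f_n-S)\ge 1$ for large $n$ gives $f_n(0)=0$, and the decomposition $f_n'-F(X,f_n) = (f_n-S)' - (F(X,f_n)-F(X,S))$ combined with the Lipschitz bound makes both summands have valuation tending to infinity.

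The heart of the argument is $(\Leftarrow)$. Setting $d_n := f_n-S$ and $e_n := f_n'-F(X,f_n)$, the Lipschitz decomposition writes $F(X,f_n)-F(X,S) = d_n\,G_n$ where $G_n := H(X,f_n,S)$; this is a well-defined power series because $f_n$ and $S$ have positive valuation so the substitution converges $X$-adically. This produces the linear ODE $d_n' = G_n\,d_n + e_n$ with $d_n(0)=0$. If $d_n \ne 0$, set $v = \val d_n \ge 1$; in characteristic zero $\val d_n' = v-1$, while $\val(G_n d_n) \ge v$ because $\val G_n \ge 0$. Reading off valuations in the ODE then forces $v-1 \ge \val e_n$, hence $\val d_n \ge \val e_n + 1$, a bound which also holds trivially when $d_n = 0$. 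Since $\val e_n \to \infty$ by hypothesis, $\val(f_n-S)\to\infty$. The only delicate step is justifying the substitution defining $G_n$; once that is secured, the argument is a short piece of valuation bookkeeping.
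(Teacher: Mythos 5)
Your proof is correct, and it reaches the same two key facts as the paper --- a formal Lipschitz estimate for $F$ in its second argument, and the shift of valuation by one under differentiation --- but it packages them differently. The paper works in integrated form throughout: it makes $\mathbb{K}[[X]]$ a complete ultrametric space, shows that $\mathcal{G}\colon Y\mapsto\int F(X,Y)$ is a strict contraction via the Taylor expansion $F(X,Y_1)-F(X,Y_2)=\frac{\partial F}{\partial Y}(X,Y_2)(Y_1-Y_2)+O((Y_1-Y_2)^2)$ (the analytic counterpart of your algebraic factorization $F(X,Y)-F(X,Z)=(Y-Z)H(X,Y,Z)$), obtains existence and uniqueness by Picard iteration and completeness, deduces the forward implication from $S-f_n=\mathcal{G}(S)-\mathcal{G}(f_n)+O(x^{K+1})$, and dismisses the converse by continuity of $Y\mapsto Y'-F(X,Y)$. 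You instead prove existence and uniqueness by the coefficient recursion $i\,s_i=\Phi_i(s_1,\dots,s_{i-1})$ and run the main implication in differential form, reading valuations off the linear ODE $d_n'=G_nd_n+e_n$; this buys you an exact identity $\val(f_n-S)=\val(e_n)+1$ (not just an inequality) and avoids invoking the fixed-point theorem, at the price of making the characteristic-zero hypothesis explicit --- which the paper also needs, silently, both for $\int$ and for $\val d_n'=\val d_n-1$, so this is not a defect of your argument. Two points deserve a sentence in a final write-up: the substitution defining $G_n=H(X,f_n,S)$ (and indeed $F(X,f_n)$ itself) only makes sense once $f_n$ has positive valuation, so the $(\Leftarrow)$ computation must be restricted to the $n$ for which $f_n(0)=0$; and in the valuation count one should note that $d_n'$ and $G_nd_n$ cannot cancel at order $v-1$ since $\val(G_nd_n)\ge v$, which is what turns your inequality into the equality $v-1=\val e_n$.
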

Note that equations with an initial condition~$Y(0)=a\neq0$ can often be brought to this setting by changing the unknown function into~$a+Y$.\\
\begin{proof}Recall that the algebra of power series is a metric space for the distance induced by the valuation: if~$f$ and~$g$ are two power series, then~$d(f,g)=2^{-\val(f-g)}$, where~$\val$ denotes the valuation (this distance does not derive from a norm). It is a simple consequence of the definition that Cauchy sequences for this distance converge in~$\mathbb{K}[[X]]$.

The first part of the proposition is a variant of Cauchy's theorem, whose proof is straightforward thanks to 
Taylor expansions. In detail, the solutions of~$Y'=F(X,Y)$ with initial condition~$Y(0)=0$ are the fixed points of the operator~$\mathcal{G}:Y\mapsto\int{F(X,Y)}$;
this operator is a contraction:
\begin{multline*}
\val(\mathcal{G}(Y_1)-\mathcal{G}(Y_2))=\val\left(\int{F(X,Y_1)-F(X,Y_2)}\right)\\
=\val\left(\int{\frac{\partial F}{\partial Y}(X,Y_2)(Y_1-Y_2)+O((Y_1-Y_2)^2)}\right)
>\val(Y_1-Y_2);
\end{multline*}
this shows both the existence of a solution (start from~$Y=0$, iterate~$\mathcal G$ and use completeness) and its uniqueness. 

Next, if~$\val(f_n'-F(X,f_n))=K$, while~$f_n(0)=0=S(0)$, then
\begin{align*}
S-f_n&=S(0)-f_n(0)+\int{(F(X,S)-F(X,f_n))}+O(x^{K+1})\\
&=(\mathcal{G}(S)-\mathcal{G}(f_n))+O(x^{K+1}).
\end{align*}
The previous inequality with~$Y_1=S$ and~$Y_2=f_n$ shows that the valuation of the first term on the right-hand side is larger than that of the left-hand size and thus~$\val(S-f_n)\geq K+1$, which shows that~$f_n\rightarrow S$. The converse implication follows from the continuity of the map~$Y\mapsto Y'-F(X,Y)$.
\end{proof}
This proposition extends to more general equations of the type~$P(z,y,y',\ldots,y^{(n)})=0$, with natural assumptions on the initial and separant of the equation. 

\subsection{P-recursivity and Convergents}
Recall that a sequence is called P-recursive when it satisfies a linear recurrence with coefficients that are polynomial in the index. P-recursive sequences are closed under sum and product and algorithms computing the corresponding recurrences are known and implemented~\cite{Stanley1999,SalvyZimmermann1994}. 

The key to our approach is the following.

\begin{theorem}\label{prop:main}
Let~$(P_k(z))$ and~$(Q_k(z))$ be P-recursive sequences of rational functions in~$z$ and let~$F\in \mathbb{K}(z)[Y]$ be a polynomial of degree~$d>0$ in~$Y$. 
Then the sequence
  \[H_k:=Q_k^{\max(2,d)}\left(\left(\frac{P_k}{Q_k}\right)'-F\!\left(z,\frac{P_k}{Q_k}\right)\right)\]
satisfies a linear recurrence with coefficients in~$\mathbb{K}[z,k]$. 
\end{theorem}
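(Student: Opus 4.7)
The plan is to reduce the theorem to three closure properties of the class of P-recursive sequences with values in $\mathbb{K}(z)$: closure under sums, under products, and under termwise derivation with respect to $z$. Setting $m=\max(2,d)$ and expanding $F(z,Y)=\sum_{i=0}^{d}f_i(z)Y^i$, the identity
\[
H_k=Q_k^{m-2}\bigl(P_k'Q_k-P_kQ_k'\bigr)-\sum_{i=0}^{d}f_i(z)\,P_k^{i}Q_k^{m-i},
\]
valid because $m\ge 2$ and $m\ge d$, exhibits $H_k$ as a polynomial expression in the four sequences $(P_k),(Q_k),(P_k'),(Q_k')$ with coefficients in $\mathbb{K}(z)$. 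Once the three closures are granted, this representation immediately forces $(H_k)$ to be P-recursive.

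The first two closures are classical. Working with the characterisation that $(u_k)$ is P-recursive iff the $\mathbb{K}(z,k)$-vector space spanned by its shifts $S^j u_k$ has finite dimension, if the corresponding spans for $(u_k)$ and $(v_k)$ have dimensions $r_u$ and $r_v$, then shifts of $u_k+v_k$ live in a subspace of dimension at most $r_u+r_v$ and shifts of $u_k v_k$ live in a subspace of dimension at most $r_u r_v$ (the image of the tensor product under componentwise multiplication). In each case a linear dependency among successive shifts yields the required recurrence, and clearing $z$-denominators ensures that its coefficients lie in $\mathbb{K}[z,k]$.

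The less routine step, which I expect to be the main obstacle, is closure under $\partial_z$. Starting from a recurrence $\sum_i a_i(z,k)u_{k+i}=0$ satisfied by $(u_k)$, differentiating with respect to $z$ yields
\[
\sum_i a_i(z,k)\,u_{k+i}'=-\sum_i a_i'(z,k)\,u_{k+i}.
\]
Combined with the shifts of the original recurrence, this shows that the shifts of the pair $(u_k,u_k')$ generate a shift-stable $\mathbb{K}(z,k)$-module of finite dimension; projecting onto the second coordinate proves that $(u_k')$ is itself P-recursive.

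Assembling these pieces, $(P_k')$ and $(Q_k')$ are P-recursive, and closure under sums and products then forces $(H_k)$ to be P-recursive, which is exactly the claim. Effectively, the recurrence can be produced by maintaining a basis of the appropriate finite-dimensional shift-stable module containing $H_k$ and reading off a linear dependency among successive shifts—this is precisely the computational recipe already illustrated on the tangent function in Section~\ref{sec:example}.
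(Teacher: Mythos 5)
Your proposal is correct and follows essentially the same route as the paper: both expand $H_k$ as a polynomial of bounded degree in $P_k,Q_k,P_k',Q_k'$, handle the derivatives by differentiating the defining recurrences, and obtain the recurrence for $(H_k)$ from a linear dependency among shifts living in a finite-dimensional $\mathbb{K}(z,k)$-space. The only difference is organizational: you package the argument as three closure lemmas (sum, product, $\partial_z$), whereas the paper carries out the same finite-dimensionality count in a single step over the finitely many monomials in $P_{k+j},Q_{k+j}$ and their derivatives.
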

This theorem is used when~$\left(P_k\right)$ and~$\left(Q_k\right)$ are the sequences of numerators and denominators of the continued fraction supposed to converge to a solution of~$y'=F(z,y)$. Its proof constructs a recurrence for~$H_k$ from which the increase of valuation will be obtained using the reduction of order of Section~\ref{sec:reduction}. This is sufficient thanks to Proposition~\ref{lemma:val-increase} and the observation that~$\val Q_k=0$, which will follow by induction from Eq.~\eqref{eq:recPQ} and the fact that~$\val a_k>0$ in applications to C-fractions.

Again, similar results can be stated for higher order differential equations, but they proved unnecessary for the continued fractions dealt with in our experiments.

\begin{proof}
Let~$M$ be the order of the recurrence satisfied by~$\left(P_k\right)$.
Using this recurrence, all~$P_{k+i}$, $i\in\mathbb{N}$ can be rewritten as linear combinations of~$P_{k+j}$ for~$j=0,\dots,M-1$,
with coefficients in~$\mathbb{Q}(k,z)$, while the polynomials~$P'_{k+i}$ rewrite as linear combinations of those same polynomials complemented by 
$P'_{k+j}$ for~$j=0,\dots,M-1$. The same argument applies to the sequence~$\left(Q_k\right)$ and we denote by~$M'$ the order of the recurrence it satisfies.

The choice of the exponent of~$Q_k$ makes~$H_k$ a 
polynomial of degree~$d$ in~$P_k$, $Q_k$, $P_k'$ and~$Q_k'$. Thus all the~$H_{k+i}$ for~$i\in\mathbb{N}$ can be rewritten as linear combinations of monomials of degree~$d$ in~$P_{k+i},Q_{k+j}$, $i=0,\dots,M-1$, $j=0,\dots,M'-1$ and their derivatives. These monomials are in finite number~$N$, whence a linear dependency between~$H_k,\dots,H_{k+N}$ (ie, a linear recurrence of order at most~$N$ satisfied by~$\left(H_k\right)$). It can be found by linear algebra.
\end{proof}

\newcommand{\gens}{\mycmdstyle{G}} 
\begin{algorithm}
  \caption{Recurrence for~$\suite H[k][k]$}
  \label{recH}
\begin{algorithmic}
  \REQUIRE linear recurrences~$\op L_P$ and~$\op L_Q$ of order bounded by~$M$ for~$\suite P[k][k]$ and~$\suite Q[k][k]$
  \ENSURE a linear recurrence~$\op L_H$ for~$\suite H[k][k]$
  \STATE$T_0(k)\gets H_k$
  \FOR{$i=1,2,3,\dots$}
  \STATE$T_i(k) \gets T_{i-1}(k+1)$ with~$P_{k+M}$, $P'_{k+M}$, $Q_{k+M}$, $Q'_{k+M}$ rewritten in terms of values of these sequences with smaller indices, using~$\op L_P$, $\op L_Q$ and their derivatives.
  \IF {the linear equation~$\sum_{j=0}^{i-1}{c_jT_j(k)}+T_i(k)$ in the unknowns~$c_0,\dots,c_{i-1}$ has a solution}
  \RETURN$H_{k+i}+c_{i-1}H_{k+i-1}+\dots+c_0H_k=0$
  \ENDIF
  \ENDFOR
\end{algorithmic}
\end{algorithm}
As exemplified by the computation in the example of~$\tan$ in Section~\ref{sec:example}, the bound on the order on specific examples may not be as large as suggested by this proof. Our implementation thus proceeds by increasing the order one by one and looking for a linear dependency until one is found. The algorithm is outlined in Algorithm~\ref{recH}. Its termination is granted by the theorem.

We state a simple generalization of this result that could be useful in applications to continued fractions: if the partial numerators in the continued fraction expansion~\eqref{eq:cf-generalform} are of the form $r(k)z^{p(k)}$ with~$r$ rational and~$p$ polynomial, then again, the polynomials~$H_k$ defined in the theorem satisfy a linear recurrence, this time with coefficients that are polynomials in~$k,z$ and a finite number of~$z^{k^i}$, with~$i\le\deg p$. The proof follows along the same lines.

\subsection{Riccati Equations}
The case when the polynomial~$F$ of Theorem~\ref{prop:main} has degree~2 gives rise to Riccati equations that are ubiquitous in the theory of continued fractions~\cite[10.7]{BakerGraves-Morris1996}. In this case, the computation of a recurrence of the form predicted by the theorem can be made explicit in full generality.
\begin{proposition}\label{prop:riccati}Let~$\cf{k=1}{\infty}{a_k(z)}{1}$ be a solution of the Riccati differential equation~$Y'=F(z,Y)$ where~$F$ is a polynomial in~$\mathbb{K}(z)[Y]$ of degree~2 in~$Y$, let~$(P_k)$ and~$(Q_k)$ be sequences obeying the linear recurrence~$u_{k+2}=u_{k+1}+a_{k+2}(z)u_k$, with~$a_k'(z)\neq0$. Let finally~$H_k$ be defined by
\[H_k=Q_k^2\left((P_k/Q_k)'-F(z,P_k/Q_k)\right).\]
Then the sequence~$(H_k)$ satisfies the following linear recurrence of order~4:
\begin{multline*}
\frac{1}{a'_{k+1}}H_{k+1}+\left(\frac{a_k}{a_k'}-\frac{a_{k+1}+1}{a'_{k+1}}\right)H_k\\
-\left(\frac{a_k(a_k+1)}{a'_k}+\frac{a_{k+1}(a_{k+1}+1)}{a'_{k+1}}\right)H_{k-1}\\
-\left(\frac{a_k+1}{a'_k}-\frac{a_{k+1}}{a'_{k+1}}\right)a_k^2H_{k-2}
+\frac{a_{k-1}^2a^2_k}{a'_k}H_{k-3}=0.
\end{multline*}
\end{proposition}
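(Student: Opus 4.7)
My plan is to exploit the explicit quadratic form of $H_k$ and the additivity of the target recurrence in the data of $F$. Writing $F(z,Y) = \alpha(z) + \beta(z) Y + \gamma(z) Y^2$ and using $(P_k/Q_k)' = (P_k' Q_k - P_k Q_k')/Q_k^2$, we get
\[
H_k = (P_k' Q_k - P_k Q_k') - \alpha Q_k^2 - \beta P_k Q_k - \gamma P_k^2.
\]
Since the coefficients of the order-4 recurrence in the statement involve only $a_{k-1}, a_k, a_{k+1}$ and the derivatives $a_k', a_{k+1}'$, and in particular are independent of $\alpha, \beta, \gamma$, it suffices by linearity to prove separately that each of the four sequences $(P_k' Q_k - P_k Q_k')$, $(P_k^2)$, $(P_k Q_k)$, $(Q_k^2)$ satisfies the same recurrence.

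The three quadratic sequences $(P_k^2)$, $(P_k Q_k)$, $(Q_k^2)$ are natural objects since $(P_k)$ and $(Q_k)$ are both solutions of the order-2 linear recurrence $u_{k+2} = u_{k+1} + a_{k+2} u_k$; their pairwise products span the symmetric square of this operator, an order-3 P-recursive structure whose defining recurrence follows directly from substituting the three-term recurrence twice. The classical determinantal identity $W_k := P_k Q_{k-1} - P_{k-1} Q_k = -a_k W_{k-1}$, which iterates to $W_k = (-1)^{k-1}(a_1 \cdots a_k)\,W_0$, is the key algebraic tool that collapses cross-terms and produces the factors $a_k$, $a_{k+1}$, $a_k^2$, $a_{k-1}^2 a_k^2$ visible in the stated coefficients.

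The remaining piece $(P_k' Q_k - P_k Q_k')$ requires differentiating the three-term recurrence to obtain $P_{k+2}' = P_{k+1}' + a_{k+2}' P_k + a_{k+2} P_k'$, and likewise for $Q_k$. Substitution produces extra terms proportional to $a_{k+2}'$, and eliminating these between successive shifts is precisely what brings in the denominators $1/a_k'$ and $1/a_{k+1}'$ characteristic of the stated recurrence. Once every $H_{k+i}$ for $i \in \{-3, -2, -1, 0, 1\}$ has been rewritten in a common finite monomial basis built from $\{P_{k-1}, Q_{k-1}, P_k, Q_k\}$ and their first derivatives, the claimed identity reduces to the vanishing of each coefficient in that basis---a finite linear-algebra check. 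The main obstacle is the sheer size of the algebraic expansion rather than any conceptual difficulty; it is most efficiently carried out via the guess-and-prove template of Algorithm~\ref{recH} specialized to $d = 2$, and then verified symbolically.
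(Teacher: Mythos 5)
Your proposal is correct and follows essentially the same route as the paper: both reduce the identity to a finite symbolic verification by rewriting all shifted $P$'s, $Q$'s and their derivatives in a common basis (the paper uses $P_{k-3},Q_{k-3},P_{k-2},Q_{k-2}$; you use $P_{k-1},Q_{k-1},P_k,Q_k$) and checking that all coefficients cancel. Your decomposition of $H_k$ into the four pieces $P_k'Q_k-P_kQ_k'$, $Q_k^2$, $P_kQ_k$, $P_k^2$ is a tidy organizational refinement that makes explicit why the recurrence coefficients are independent of $F$ --- a fact the paper only remarks on after the proof --- but it does not change the substance of the argument.
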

The shift of the indices (from~$H_{k+1}$ to~$H_{k-3}$) is only for readability.
A nice property of this recurrence is that its coefficients do not depend on the differential equation, but only on the sequence~$a_k$. This persists for higher degrees: a differential equation with a cubic right-hand side leads to a recurrence of order~6 that does not depend on the equation.
\begin{proof}The formula is obtained automatically by the method from the proof of Theorem~\ref{prop:main}, on a differential equation with symbolic coefficients. It could also be derived by hand. However, once it is given, it is a simple matter to produce a proof: inject the definition of~$H_k$ into the recurrence, rewrite all the~$P_k$'s and~$Q_k$'s using the recurrence they satisfy in terms of~$P_{k-3},Q_{k-3},P_{k-2},Q_{k-2}$ and collect terms to observe that the left-hand side becomes~0.
\end{proof}
As an example, setting~$a_k(z)=-z^2/((2k-1)(2k-3))$ recovers Eq.~\eqref{eq:bigrec} obtained for~$\tan$. 

\begin{corollary}\label{coro}
If the sequences~$(P_k)$ and~$(Q_k)$ satisfy a linear recurrence of the form~$u_{k+2}=b_{k+2}(z)u_{k+1}+a_{k+2}(z)u_k$, then the sequence~$(H_k)$ satisfies a fourth-order linear recurrence obtained by evaluating that of Prop.~\ref{prop:riccati}, replacing~$a_1$ by~$a_1/b_1$ and~$a_k$ by~$a_k/(b_kb_{k-1})$ for~$k\geq 2$.
\end{corollary}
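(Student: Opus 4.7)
The plan is to reduce the corollary to Proposition~\ref{prop:riccati} by a gauge transformation that normalizes the three-term recurrence into the standard form assumed there. Starting from $u_{k+2} = b_{k+2}(z) u_{k+1} + a_{k+2}(z) u_k$, I would set
\[ \beta_k := \prod_{j=1}^{k} b_j(z) \quad (k \ge 0), \qquad \beta_{-1} := 1, \]
and define $\tilde P_k := P_k/\beta_k$ and $\tilde Q_k := Q_k/\beta_k$. Using $\beta_{k+2} = b_{k+2} b_{k+1} \beta_k$, a short calculation shows that the rescaled sequences satisfy $\tilde u_{k+2} = \tilde u_{k+1} + \tilde a_{k+2}\tilde u_k$, with $\tilde a_j = a_j/(b_j b_{j-1})$ for $j \ge 2$. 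On the standard initial data $P_{-1}=Q_0=1$, $P_0=Q_{-1}=0$, this gives $\tilde P_1 = a_1/b_1$ and $\tilde Q_1 = 1$, so that the first convergent of the rescaled sequence equals $\tilde a_1 = a_1/b_1$, matching the separate substitution rule announced in the statement.

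Next, I would observe that the convergents themselves are invariant under the rescaling, since the common factor $\beta_k$ cancels in the quotient: $\tilde P_k/\tilde Q_k = P_k/Q_k$. Only the $Q_k^2$ prefactor in the definition of $H_k$ changes, yielding
\[ \tilde H_k := \tilde Q_k^2\left(\left(\frac{\tilde P_k}{\tilde Q_k}\right)' - F\!\left(z,\frac{\tilde P_k}{\tilde Q_k}\right)\right) = \frac{H_k}{\beta_k^2}. \]
Proposition~\ref{prop:riccati} applied to the pair $(\tilde P_k,\tilde Q_k)$ then produces a fourth-order linear recurrence for $\tilde H_k$ whose coefficients are exactly those of the formula of Proposition~\ref{prop:riccati} with $a_1$ replaced by $a_1/b_1$ and $a_j$ replaced by $a_j/(b_j b_{j-1})$ for $j \ge 2$. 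Substituting back $\tilde H_k = H_k/\beta_k^2$ yields a fourth-order linear recurrence satisfied by the original $H_k$, whose coefficients differ from the substituted ones only by harmless monomial factors $\beta_j^2$ along each shift.

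There is no serious obstacle in this argument: the work is essentially bookkeeping. The only point requiring care is the index $j=1$, which is not covered by the three-term recurrence and must instead be read off from the initial conditions; the check above confirms that the substitution $a_1 \mapsto a_1/b_1$ does fall out correctly, accounting for the special clause in the statement. A mild nondegeneracy condition ($\tilde a_k'(z) \neq 0$) must also be verified in order to apply Proposition~\ref{prop:riccati}, but this holds generically.
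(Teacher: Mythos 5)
Your proposal is correct and follows essentially the same route as the paper: the gauge transformation $\tilde P_k = P_k/(b_1\dotsm b_k)$, $\tilde Q_k = Q_k/(b_1\dotsm b_k)$, the observation that the convergents $P_k/Q_k$ are unchanged, and the application of Proposition~\ref{prop:riccati} to the normalized recurrence. Your extra bookkeeping on the initial index and on the rescaling $\tilde H_k = H_k/\beta_k^2$ only makes explicit what the paper's three-sentence proof leaves implicit.
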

\begin{proof}
This is a classical transformation of continued fractions. Setting~$\tilde{P}_k=P_{k}/(b_1\dotsm b_{k-1}b_k)$ and similarly for~$\tilde{Q}_k$ and injecting into the recurrence equation shows that both sequences~$(\tilde{P}_k)$ and~$(\tilde{Q}_k)$ satisfy
\[u_{k+2}=u_{k+1}+\frac{a_{k+2}}{b_{k+2}b_{k+1}}u_k.\]
Since~$\tilde{P}_k/\tilde{Q}_k=P_k/Q_k$, the proposition applies.
\end{proof}

\subsection{Nonregularity and Periodicities}\label{sec:periods}
As the example of the continued fraction for the exponential function in Eq.~\eqref{eq:cf-exp} shows, not all common closed forms for continued fractions are given by one rational function. However, most C-fractions formulas in the literature appear to be ``periodic'', in the sense that there exists a period~$\ell>0$, and~$\ell$ sequences~$(a^0_k),\dots,(a^{\ell-1}_k)$, that alternately define the partial numerators~$a_k$:~$a_k=a^{(k\mod \ell)}_k$. The case~$\ell=2$ encountered for~$\exp$ is the most common, but higher values also happen (e.g., $\ell=4$ for~$\psi''$, where~$\psi=\Gamma'/\Gamma$ is the logarithmic derivative of the Gamma function).

This is not a restriction in our approach, by the following.
\begin{lemma} 
  \label{holo_period}
  Given a period~$\ell>0$, a sequence~$\suite u[k][k]$ is P-recursive if and only if all its subsequences~$\suite {u}[k][\ell k+j]$ are P-recursive,  for~$j=0\ldots \ell-1$.
\end{lemma}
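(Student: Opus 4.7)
The plan is to prove the two implications separately, one algebraic and one essentially elementary.

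For the forward direction ($\Rightarrow$), suppose $(u_n)$ is annihilated by an operator $\mathcal{L}=\sum_{i=0}^{r}p_i(n)S^i$ with polynomial coefficients, where $S$ denotes the shift $n\mapsto n+1$. I would work in the Ore algebra $\mathbb{K}(n)\langle S\rangle$. Since the quotient $\mathbb{K}(n)\langle S\rangle/(\mathcal{L})$ has dimension $r$ as a $\mathbb{K}(n)$-vector space, the classes of the $r+1$ elements $1,S^\ell,S^{2\ell},\dots,S^{r\ell}$ are necessarily linearly dependent. This produces rational functions $m_0(n),\dots,m_r(n)$, not all zero, with $\mathcal{M}:=\sum_{i=0}^{r}m_i(n)S^{\ell i}$ lying in the left ideal $(\mathcal{L})$. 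Clearing denominators and evaluating $\mathcal{M}u=0$ at $n=\ell k+j$ yields a polynomial-coefficient recurrence for $v_k^{(j)}:=u_{\ell k+j}$, because each $m_i(\ell k+j)$ is a polynomial in $k$ after clearing.

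For the converse ($\Leftarrow$), assume that for every $j\in\{0,\dots,\ell-1\}$ the sequence $v_k^{(j)}:=u_{\ell k+j}$ satisfies $\sum_i p_i^{(j)}(k)\,v_{k+i}^{(j)}=0$. Define an auxiliary sequence $w_n^{(j)}$ to equal $u_n$ when $n\equiv j\pmod{\ell}$ and $0$ otherwise, so that $u_n=\sum_{j=0}^{\ell-1}w_n^{(j)}$. The substitution $k=(n-j)/\ell$ turns each $p_i^{(j)}(k)$ into a polynomial in $n$. The lifted identity $\sum_i p_i^{(j)}\!\bigl((n-j)/\ell\bigr)\,w_{n+\ell i}^{(j)}=0$ then holds at every $n\in\mathbb{N}$: on the class $n\equiv j\pmod{\ell}$ by the assumed recurrence for $v^{(j)}$, and on every other residue class trivially, since all the terms $w_{n+\ell i}^{(j)}$ vanish simultaneously. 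Each $w^{(j)}$ is therefore P-recursive, and the classical closure of P-recursive sequences under sums yields the conclusion for $(u_n)$.

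The main obstacle lies in the forward direction: one must guarantee the existence inside $(\mathcal{L})$ of a nonzero operator that is a polynomial in $S^\ell$ alone. This is exactly what the finite dimension of $\mathbb{K}(n)\langle S\rangle/(\mathcal{L})$ provides (or, equivalently, an explicit $\mathbb{K}(n)$-linear-algebra argument expressing the iterates $u_{n+\ell},u_{n+2\ell},\dots$ in the fixed basis $u_n,\dots,u_{n+r-1}$ and detecting the first linear dependency). The reverse direction is by contrast almost routine: the zero-padding trick sidesteps any need to encode residues modulo $\ell$ with polynomial coefficients and reduces the claim to standard closure under sums.
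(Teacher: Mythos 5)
Your proof is correct, but it takes a genuinely different route from the paper's. The paper argues entirely at the level of generating functions and D-finiteness: for the direction ``sections $\Rightarrow$ whole'' it uses closure of D-finite series under the substitution $z\mapsto z^\ell$ and under sums, and for ``whole $\Rightarrow$ sections'' it takes the Hadamard product of $s(z)$ with $z^j/(1-z^\ell)$, divides by $z^j$ and substitutes $z^{1/\ell}$. You instead work directly with sequences and recurrence operators. Your forward direction replaces the Hadamard-product argument (the heaviest closure property the paper invokes) by a self-contained linear-algebra fact: since $\mathbb{K}(n)\langle S\rangle/(\mathcal{L})$ has dimension $r=\ord\mathcal{L}$ over $\mathbb{K}(n)$, the classes of $1,S^\ell,\dots,S^{r\ell}$ are dependent, yielding an annihilator that is a polynomial in $S^\ell$ alone, which specializes at $n=\ell k+j$ to a recurrence for each section (and, as you note, the cleared coefficients remain nonzero polynomials in $k$ because $k\mapsto\ell k+j$ is a nonconstant affine substitution). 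Your converse, via the zero-padded sequences $w^{(j)}$ and closure under sums, is essentially the sequence-level translation of the paper's series-level step --- $z^js_j(z^\ell)$ is exactly the generating function of $w^{(j)}$ --- but it trades closure under algebraic substitution for the elementary observation that the lifted recurrence holds trivially off the residue class $j$. Both proofs are constructive; the paper's phrasing has the practical advantage that every step is an off-the-shelf closure operation already implemented in \texttt{gfun}, which matters for the automated pipeline, whereas yours is more elementary and makes explicit where the operator in $S^\ell$ comes from.
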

\begin{proof}This lemma is classical. We give a constructive proof for completeness. If the sections~$\suite{u}[k][\ell k+j]$ are P-recursive, then their generating series~$s_j(z)=\sum_{k\geq 0}{u_{\ell k+j}z^k}$ are D-finite, then so is~$s_0(z^\ell)+zs_1(z^\ell)+\dots+z^{\ell-1}s_{\ell-1}(z^\ell)$ and therefore its sequence of coefficients~$\suite u[k][k]$ is P-recursive. Conversely, if~$\suite u[k][k]$ is P-recursive, then its generating series~$s(z)$ is D-finite and so is its Hadamard product with $z^j/(1-z^\ell)$, then also its quotient by~$z^j$ evaluated at~$z^{1/\ell}$ and this is precisely the generating series of~$\suite u[k][\ell k+j]$.
\end{proof}

In cases like the exponential function, this lemma implies that the sequence of partial numerators~$\suite a[k][k]$ itself satisfies a linear recurrence. With the recurrences~\eqref{eq:recPQ}, this alone does not imply that~$(P_k)$ and~$(Q_k)$ are also P-recursive for in general, no such closure property exists. For instance, the sequence defined by~$u_n:=\prod_{k=1}^n{k!}$ satisfies a linear recurrence of order~1 with a coefficient ($k!$) that is P-recursive, but~$(u_n)$ itself is not P-recursive, as can be seen from its asymptotic growth that is too fast.
The crucial property in our application is that the sequences~$(a_k^i)$ are rational in~$k$. This allows for the following.
\begin{lemma}
\label{PQ_dfinite}
Let~$a^0_k, \ldots, a^{\ell-1}_k$ be rational functions of~$k$ and~$z$, $a_k$ be defined for~$k\geq 0$ by 
 ~$a_k := a^{(k\mod \ell)}_k$ and let the sequences~$(P_k)$ and~$(Q_k)$ be defined by the recurrence~\eqref{eq:recPQ}. Then~$(P_k)$ and~$(Q_k)$ are P-recursive sequences.
\end{lemma}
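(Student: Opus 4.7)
My strategy is to reduce to Lemma~\ref{holo_period}: it suffices to show that for each $j\in\{0,\dots,\ell-1\}$ the subsequence $(P_{\ell k+j})_k$ (and the analogous $(Q_{\ell k+j})_k$) is P-recursive. Working directly with~\eqref{eq:recPQ} is awkward because $a_k$ is only piecewise rational in $k$; collapsing $\ell$ steps at once turns this piecewise dependence into a genuinely rational one in a single variable.

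Accordingly, I would first iterate~\eqref{eq:recPQ} exactly $\ell$ times and introduce the vector $w_k:=(P_{\ell k-1},P_{\ell k})^{T}$. A routine unrolling yields a first-order matrix recurrence $w_{k+1}=M_k\, w_k$, where the entries of the $2\times 2$ matrix $M_k$ are explicit polynomials in $a_{\ell k+1},\dots,a_{\ell k+\ell}$. The crucial point is that for each fixed $i\in\{1,\dots,\ell\}$ we have $a_{\ell k+i}=a^{(i\bmod \ell)}_{\ell k+i}$, and the specialization of a rational function at $k\mapsto \ell k+i$ (for $i$ fixed) is again rational in $k$. Hence $M_k$ has entries in $\mathbb{K}(z)(k)$.

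Next I would convert this first-order two-dimensional system into scalar P-recurrences. Writing $M_k=\left(\begin{smallmatrix}\alpha_k&\beta_k\\ \gamma_k&\delta_k\end{smallmatrix}\right)$, if $\beta_k\not\equiv 0$ one solves the first row for $P_{\ell k}$ and substitutes into the second, obtaining a scalar order-$2$ recurrence with coefficients in $\mathbb{K}(z)(k)$ for $(P_{\ell k-1})_k$; clearing denominators yields a genuine P-recurrence. The degenerate case $\beta_k\equiv 0$ is even easier since $M_k$ is then triangular. The same procedure produces a P-recurrence for $(P_{\ell k})_k$. For the remaining offsets $j\in\{1,\dots,\ell-2\}$, iterating~\eqref{eq:recPQ} at most $\ell$ times expresses $P_{\ell k+j}$ as a $\mathbb{K}(z)(k)$-linear combination of $P_{\ell k-1}$ and $P_{\ell k}$, and closure of P-recursive sequences under sum and multiplication by rational functions of the index yields P-recursivity of $(P_{\ell k+j})_k$. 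The identical argument handles $(Q_k)$, and the \emph{if} direction of Lemma~\ref{holo_period} finishes the proof.

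The main obstacle is really just careful bookkeeping of indices: one must verify that specializing each $a^j$ at $k\mapsto \ell k+i$ preserves rationality in the remaining variable $k$, and that iterating~\eqref{eq:recPQ} $\ell$ times leaves no residual dependence on $k\bmod \ell$. Nothing genuinely algebraic is at stake, and the hypothesis that each $a^j_k$ is rational in $k$ (rather than merely P-recursive, which would be insufficient in view of the counterexample $\prod k!$ mentioned just above) is exactly what makes the argument go through.
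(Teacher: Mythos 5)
Your proposal is correct and follows essentially the same route as the paper: unroll~\eqref{eq:recPQ} over one period so that the coefficients become genuinely rational in $k$ on each residue class, deduce a second-order scalar recurrence for each section $(P_{\ell k+j})_k$ (the paper phrases this as linear dependence of three elements in the rank-two module generated by $P_{\ell k}$ and $P_{\ell k+1}$ over $\mathbb{K}(z,k)$, rather than via a $2\times 2$ matrix system, but this is only a presentational difference), and conclude by Lemma~\ref{holo_period}. No gap.
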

\begin{proof}
The proof is constructive. 
By the recurrences~\eqref{eq:recPQ} and the definition of~$a_k$, all~$P_{\ell k+j}$ for~$j=1,\dots,2\ell$ rewrite as linear combinations of~$P_{\ell k+1}$ and~$P_{\ell k}$ with coefficients that are rational functions of~$z$ and~$k$. Thus~$P_{\ell k}+j$, $P_{\ell(k+1)+j}$ and~$P_{\ell(k+2)+j}$ are linearly dependent for~$j=0,\dots,d-1$. 

The same reasoning applies to~$(P_{\ell k+j})_{k\geq 0}$ for any $j\in\{0,\dots,\ell -1\}$ and shows that it is a P-recursive sequence and thefore that so is~$(P_k)$ by the previous lemma. By construction, $(Q_k)$ satisfies the same recurrence as~$(P_k)$.
\end{proof}

\paragraph{Example}The special case~$\ell=2$ is important in applications. Starting from
$P_{2k}=P_{2k-1}+a_{2k}P_{2k-2}$ and its first two shifts, 
the linear combination~$P_{2k+2}+P_{2k+1}-a_{2k+1}P_{2k}$ gets rid of the terms with odd index, leaving:
\begin{equation}\label{eq:rec-even}
P_{2k+2}=(1+a_{2k+1}+a_{2k+2})P_{2k}-a_{2k}a_{2k+1}P_{2k-2}.
\end{equation}
A similar computation would give a recurrence between the terms with odd index.
\smallskip

The proof of Lemma~\ref{PQ_dfinite} leads to an algorithm in two steps: compute a recurrence for~$(P_k)$ and~$(Q_k)$ and then appeal to Lemma~\ref{holo_period}. A simpler and faster computation proceeds directly from a recurrence for~$(P_{\ell k})$, thanks to the following. 
\begin{proposition}Let~$F(X,Y)$ be a rational function, 
that is regular at~$X=Y=0$. Let~$a^i_k$, $i=1,\dots,\ell$ be rational functions in~$X$ and~$k$ with positive valuation in~$X$. Let~$(a_k)$, $(P_k)$ and~$(Q_k)$ be defined as in the previous lemma and~$(H_k)$ as in Theorem~\ref{prop:main}. If~$\val H_{k\ell}\rightarrow\infty$ as~$k\rightarrow\infty$, then the continued fraction~$\cf{m=1}{\infty}{a_k}{1}$ is the solution of~$Y'=F(X,Y)$ with~$Y(0)=0$.
\end{proposition}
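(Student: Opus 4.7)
My plan is to reduce the statement to Proposition~\ref{lemma:val-increase} applied to the subsequence $(f_{k\ell})_{k\ge 0}$ of convergents $f_n := P_n/Q_n$, and then to recover convergence of the full sequence $(f_n)$ by a classical Cauchy argument specific to continued fractions.

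First, I would verify the relevant valuation bookkeeping by induction on the recurrences~\eqref{eq:recPQ}. Since each~$a_k^i$ has positive valuation in~$X$, so does every~$a_k$; hence $\val Q_k = 0$ (in particular $Q_k(0) \neq 0$, so $f_k \in \mathbb{K}[[X]]$) and $\val P_k \ge 1$, so $f_k(0)=0$ for all $k\ge 0$. With $\val Q_k = 0$, the definition of~$H_k$ in Theorem~\ref{prop:main} yields
\[ \val\bigl(f_k' - F(X,f_k)\bigr) = \val H_k. \]

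Second, I would apply Proposition~\ref{lemma:val-increase} to the subsequence $(f_{k\ell})_{k\ge 0}$. The hypothesis $\val H_{k\ell} \to \infty$ combined with the identity above shows $\val(f_{k\ell}' - F(X,f_{k\ell})) \to \infty$, and $f_{k\ell}(0)=0$ has already been established. The proposition then gives $f_{k\ell}\to S$ in the valuation topology, where $S$ is the unique power series solution of $Y' = F(X,Y)$ with $Y(0)=0$ (regularity of $F$ at $(0,0)$ ensures that $F$ lies in $\mathbb{K}[[X,Y]]$, so the proposition applies).

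Third, I would show that the full sequence $(f_n)$ is already Cauchy in $\mathbb{K}[[X]]$, independently of the subsequence. The classical determinant identity
\[ P_n Q_{n-1} - P_{n-1} Q_n = (-1)^{n-1} a_1 a_2 \dotsm a_n, \]
obtained by a straightforward induction from~\eqref{eq:recPQ}, gives
\[ f_n - f_{n-1} = (-1)^{n-1}\frac{a_1 a_2 \dotsm a_n}{Q_n Q_{n-1}}. \]
Since $\val a_i \ge 1$ and $\val Q_j = 0$, we obtain $\val(f_n - f_{n-1}) \ge n$. Thus $(f_n)$ converges to some $f \in \mathbb{K}[[X]]$, and since the subsequence $(f_{k\ell})$ converges to~$S$, the limits agree: $f = S$. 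By definition of the continued fraction as the limit of its convergents, this is precisely the claim that $\cf{m=1}{\infty}{a_k}{1}$ represents~$S$.

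The main obstacle is the gap between subsequence convergence, which is all that the hypothesis on $H_{k\ell}$ directly delivers, and convergence of the full sequence of convergents. The Cauchy argument via consecutive convergents bridges it cleanly because the positive valuations of the~$a_k$ force the increments $f_n - f_{n-1}$ to shrink regardless of the residue of~$n$ modulo~$\ell$, so no further information on the off-period $H_{k\ell + j}$ is required.
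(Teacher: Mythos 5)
Your proposal is correct and follows essentially the same route as the paper: establish $Q_k(0)=1$ and $f_k(0)=0$ from the positive valuations of the $a_k$, apply Proposition~\ref{lemma:val-increase} to the subsequence $(f_{k\ell})$ using $\val H_{k\ell}=\val(f_{k\ell}'-F(X,f_{k\ell}))\to\infty$, and identify the limit of the full sequence of convergents with that of the subsequence. The only difference is that you spell out, via the determinant identity $P_nQ_{n-1}-P_{n-1}Q_n=(-1)^{n-1}a_1\dotsm a_n$, the convergence of the whole sequence $(f_n)$, which the paper simply asserts as the continued fraction being ``well-defined'' by the valuation condition.
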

Here again, other values for~$Y(0)$ are obtained by a change of unknown function.
\begin{proof}Since the denominator of~$F$ does not vanish at~$0$, $F$ admits an expansion in power series and thus by Proposition~\ref{lemma:val-increase}, the differential equation~$Y'=F(X,Y)$ with~$Y(0)=0$ possesses a formal power series solution~$S(X)$. 

The condition on the valuations of the sequences~$(a_k^i)$ makes the continued fraction well-defined, in the sense that the sequence of power series~$(P_k/Q_k)$ converges to a power series~$G(X)$. Thus if its subsequence~$f_k=P_{k\ell}/Q_{k\ell}$ converges to~$S(X)$, then we have~$G(X)=S(X)$.
Induction from Eq.~\eqref{eq:recPQ} shows that~$Q_k(0)=1$ for all~$k\ge0$ and that~$P_k(0)=0$ (by the positive valuation of~$a_1^1$). This gives~$f_k(0)=0$ and~$\val H_{k\ell}=\val(f_k'-F(X,f_k))$. Thus by Proposition~\ref{lemma:val-increase}, the sequence~$(f_k)$ converges to~$S$. 
\end{proof}

\paragraph{Example}The proof for the continued fraction for $\exp$ from the introduction goes as follows.
Starting from the recurrences for~$(P_k)$ when~$k$ is even and when~$k$ is odd and proceeding as for Eq.~\eqref{eq:rec-even} yields
\[P_{2k+2}=P_{2k}+\frac{z^2}{4(4k^2-1)}P_{2k-2},\]
which is also satisfied by~$Q_{2k}$ since this computation does not depend on the initial conditions.

Next, turn to the numerator of the evaluation of~$y'-y-1$ at~$y=P_{2k}/Q_{2k}$, namely
\[H_{2k}=P_{2k}'Q_{2k}-P_{2k}Q_{2k}'-P_{2k}Q_{2k}-Q_{2k}^2.\]
Using Proposition~\ref{prop:riccati}, or directly as in Section~\ref{sec:example}, leads to a recurrence of order~4, 
on which reduction of order yields
$H_{2k+2}=-{z^2}H_{2k}/({4(2k+1)^2})$,
which concludes the proof.

\section{Experiments}\label{sec:expe}

An overview of the whole approach is given in Figure~\ref{fig:method}. In practice, $N=20$ and~$L=2$ have proved sufficient in our experiments except for one case of period~4. For the computation of the first terms of the continued fraction, one can either compute a power series expansion first, e.g., by Newton iteration~\cite{BrentKung1978}, or use techniques for continued fraction expansions of solutions of Riccati equations~\cite{CooperCooperJones1991}.
\begin{algorithm}
\begin{algorithmic}
  \REQUIRE~$Y'=F(z,Y)$ with~$F\in\mathbb{K}(z)[Y]$;\\
 a bound~$N$ on the number of coefficients to guess from;\\
 a bound~$L$ on the period to be found.
 \ENSURE In case of success, an explicit expression for the continued fraction expansion of the solution such that~$Y(0)=0$.
 \vspace{.3em}
  \STATE Compute the first coefficients~$a_1,\dots,a_N$ of the continued fraction expansion of the power series solution of~$Y'=F(z,Y)$ with~$Y(0)=0$.
  \FOR{$\ell=1,2,3,\dots,L$}
  \STATE{Use rational interpolation to compute~$a_i^j$ interpolating the subsequences~$(a_{\ell i+j})_i$, $j=0,\dots,\ell-1$.}
  \IF {this has been successful}
    \STATE{compute a recurrence~$\mathcal{R}$ for~$H_{k\ell}$, with~$H_k$ defined in Theorem~\ref{prop:main}}
    \STATE{compute a new recurrence~$\mathcal{R'}$ from~$\mathcal{R}$ and the initial conditions for~$(H_k)$ using Algorithm~\ref{reducerecorder}}
    \IF {$\mathcal{R'}$ exhibits the increase of~$(\val H_{\ell k})$}
      \RETURN{the rational functions~$a_i^j$}
    \ENDIF
  \ENDIF
  \ENDFOR
  \RETURN{FAIL}
\end{algorithmic}
\caption{Discovery and proof of continued fractions}
\label{fig:method}
\end{algorithm}

Our main experimental result is the following.
\begin{obs}
All the 53 explicit C-fractions formulas of the compendium by Cuyt {\it et alii}~\cite{CuytPetersenVerdonkWaadelandJones2008}
can be guessed and proved by our approach and its variants below. Among them the vast majority (44) are solutions of Riccati equations, 2 satisfy~$q$-Riccati equations and the remaining~7 satisfy difference equations.
\end{obs}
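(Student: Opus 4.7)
The plan is essentially computational verification, since this is an empirical observation rather than a classical theorem: for each of the 53 entries in the Cuyt \emph{et alii} compendium I would run the pipeline described in Algorithm~\ref{fig:method} (and its difference and $q$-difference variants) and record whether it succeeds. First I would extract from the compendium the defining functional equation of each function whose C-fraction appears there; in the vast majority of cases this equation is already present (or standard), and when it is of Riccati type it falls directly under Theorem~\ref{prop:main} and Proposition~\ref{prop:riccati}. For the 7 functions defined by difference equations and the 2 defined by $q$-difference equations, I would first establish the analogues of Theorem~\ref{prop:main} and Proposition~\ref{lemma:val-increase} in those settings (the arguments transfer verbatim, with the valuation and the contraction estimate carried out in the appropriate metric ring of series).

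Next, for each example, I would expand the solution to $N=20$ power series coefficients, convert to its first continued fraction coefficients, and run the guessing stage of Section~\ref{sec:guessing} trying periods $\ell=1,2,\dots,L$ with $L=2$ (and $L=4$ for the one exceptional case, $\psi''$). When guessing produces closed forms $a^j_i$, I would invoke Proposition~\ref{prop:riccati} (or Corollary~\ref{coro}) to write down the fourth-order recurrence satisfied by $(H_k)$ in the Riccati case, or use Algorithm~\ref{recH} more generally, and then apply the reduction of order of Algorithm~\ref{reducerecorder} to this recurrence. Success is detected by observing a reduced recurrence from which $\val H_{\ell k}\to\infty$ is manifest (typically of order one, as in Eq.~\eqref{eq:small_rec_H}); by Proposition~\ref{lemma:val-increase} this concludes the proof for that entry.

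The main obstacle that I expect is twofold. First, for the nonregular, higher-period formulas, the reduction of order step must succeed in extracting the genuine small right factor from the order-$4\ell$ recurrence produced by the theorem; this is where the interplay between $N$ (the number of terms used for guessing the right factor) and the size of the order-degree curve matters, and the period-4 example is the stress test. Second, for the difference and $q$-difference cases one must verify that the valuation argument of Proposition~\ref{lemma:val-increase} still concludes, which requires checking that the functional equation admits a unique formal solution and that the analogous contraction holds; for the specific 9 nonstandard entries I would do this case by case. Everything else (rational interpolation, Hermite--Pad\'e, Euclidean divisions, linear algebra for the dependency in Algorithm~\ref{recH}) is a black-box call to the implementation, so the verification of the observation reduces to tabulating the output of the implementation on each of the 53 inputs and certifying termination.
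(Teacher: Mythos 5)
Your proposal is correct and matches the paper's own treatment: the Empirical Observation is established purely by running the guess-and-prove pipeline of Algorithm~\ref{fig:method} (with $N=20$, $L=2$, and the period-4 exception) on each of the 53 entries and certifying success case by case, exactly as you describe. The only minor divergence is that the paper obtains the Riccati equations themselves by guessing from power series expansions rather than extracting them from the compendium, and it asserts rather than re-proves the difference and $q$-difference analogues of Theorem~\ref{prop:main} and Proposition~\ref{lemma:val-increase}; your extra care on those points is harmless and, if anything, more explicit.
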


We now give more detail on the calculations in the differential case and then outline the variants of our method in the~$q$-difference and difference cases. An implementation in the differential case is available under the form of a submodule \texttt{gfun:-ContFrac} of the package \texttt{gfun} (for versions $\ge3.70$). It can be downloaded from our web pages. All the examples of solutions of Riccati equations from~\cite{CuytPetersenVerdonkWaadelandJones2008} are provided through the associated help pages.

\subsection{C-fractions from Differential Equations}
In our experiment, the Riccati equations were themselves found by a guessing approach on power series expansions to small order (less than~$30$). Depending on how one decides to define the power series from the computer algebra point of view, these Riccati equations can also be automatically proved to hold. 

\paragraph{Gauss's continued fraction}
The classical hypergeometric series is
\[{}_2F_1(a,b;c;z):=\sum_{n\ge0}{\frac{(a)_n(b)_n}{(c)_n}\frac{z^n}{n!}},\]
where~$(a)_n$ is the Pochhammer symbol~$(a)_n=a(a+1)\dotsm(a+n-1)$.
Gauss proved the following identity
\begin{gather*}
\frac{{}_2F_1(a,b;c;z)}{{}_2F_1(a,b+1;c+1;z)}=1+\cf{m=1}{\infty}{a_mz}{1},\\
a_{2k}=-\frac{(k+b)(k+c-a)}{(2k+c)(2k-1+c)}, 
a_{2k+1}=-\frac{(k+a)(k+c-b)}{(2k+c)(2k+1+c)},
\end{gather*}
for the quotient of two contiguous hypergeometric series. This is the source of many continued fractions for special functions by specialization of the parameters.
If~$y=1+F$ is the function on the left-hand side, then elementary properties of the~${}_2F_1$ that can be derived from the first order recurrences satisfied by its coefficients show that
\[cz(z-1)y'=a(c-b)z + (c(a-b)z+c^2)y + c^2y^2.\]
This is our starting point. From there, it is easy to compute the first 20~coefficients and conjecture the formulas for~$a_{2k}$ and~$a_{2k+1}$ by rational interpolation. As in Eq.~\eqref{eq:rec-even}, a recurrence for even indices follows. From Corollary~\ref{coro}, a linear recurrence of order~4 follows for the remainder~$H_{2k}$, that can be either obtained by hand from Proposition~\ref{prop:riccati}, or by a generic code that searches for linear dependency. 
Next, reduction of order gives a two-term linear recurrence within a couple of seconds:
\[
H_{2(n-2)}={z}^{2}
\frac{\left( n+a \right)  \left( n-a+c \right)  \left( n+b \right)
 \left( n-b+c \right)}
 {\left( 2n+c \right) ^{2} \left( 2n+c-1 \right)^{2}}  H_{2(n-3)}
\]
and this concludes the automatic proof.

\paragraph{More parameters} 
Khovanskii~\cite[p.~85]{Khovanskii1963} gives an explicit continued fraction with 5 parameters for the power series solution of the differential equation 
\[ (1+\alpha z)zy'+(\beta+\gamma z)y+\delta y^2=\epsilon z,\; y(0)=0\]
(an extra parameter~$k$ is obtained by changing~$z$ into~$z^k$ and adjusting the coefficient of~$y'$; we have relabeled the parameters). This contains the equation for Gauss's continued fraction above as a special case.

From there again rational formulas for~$a_{2k}$ and~$a_{2k+1}$ are obtained by guessing on the first 20 values; a recurrence of order 4 can be found for the remainders~$H_{2k}$; Algorithm~\ref{reducerecorder} reduces it to the conclusive recurrence:
\begin{multline*}
(2n+\beta)^2 (2n+\beta-1)^2 H_{2(n-2)}=\\
  -(\alpha n^2+(\alpha\beta+\gamma) n+\beta \gamma+\delta \epsilon) (\alpha n^2+(\alpha \beta-\gamma) n+\delta \epsilon)z^2 H_{2(n-3)}.
\end{multline*}

\paragraph{Other examples} 
We also applied our method to a few functions not mentioned by Cuyt {\it et alii}~\cite{CuytPetersenVerdonkWaadelandJones2008} and in particular found (and proved) experimentally the following nice C-fraction for the Airy function:
\begin{multline*}
  z\frac{\operatorname{Ai}'}{\operatorname{Ai}}(1/z^2)=-1-\frac {z^3}4\Big/\left(1+\cf{m=2}{\infty}{a_m(z)}1\right), \\ a_{2k}=(6k-1)z^3/8,\quad a_{2k+1}=(6k+1)z^3/8.
\end{multline*}
It also follows from known C-fractions for the divergent~${}_2F_0$.

\subsection{q-analogues}
The method used in this article also applies to~$q$-analogues. We outline the very simple example of the~$q$-exponential:
\[e_q(z):=\sum_{m\ge0}{\frac{(1-q)^m}{(1-q)(1-q^2)\dotsm(1-q^m)}z^m},\]
which satisfies the $q$-differential equation
\begin{equation}\label{eq:qdeq-exp}
\frac{e_q(qz)-e_q(z)}{(q-1)z}-e_q(z)=0.
\end{equation}
The classical exponential is obtained as the limit when~$q\rightarrow1$.
The first coefficients of the continued fraction expansion let one guess~$a_1=z$,
\[a_{2k}=-\frac{q^{k-1}(1-q)z}{(1+q^k)(1-q^{2k-1})},\quad a_{2k+1}=\frac{q^{2k}(1-q)z}{(1+q^k)(1-q^{2k+1})},\]
a clear generalization of the continued fraction~\eqref{eq:cf-exp} for~$\exp$. In order to prove this continued fraction, the recurrence for~$(P_{2k})$ is computed as in Section~\ref{sec:periods}, which gives
\[
P_{2k+2}=\left(1-\frac{(1-q)q^kz}{(1+q^k)(1+q^{k+1})}\right)P_{2k}
+\frac{q^{3k-1}(1-q)^2z^2}{(1+q^k)^2(1-q^{2k+1})(1-q^{2k-1})}P_{2k-2}.
\]
The sequence~$H_{k}$ is defined as the numerator of the evaluation of~\eqref{eq:qdeq-exp}, namely
\[\frac{P_{k}(qz)Q_k(z)-P_k(z)Q_k(qz)}{(q-1)z}-P_k(z)Q_k(qz)-Q_k(z)Q_k(qz).\]
Next, we compute a linear dependency between~$H_{2k},H_{2k+2},\dots$, which is still of order~4 (but significantly bigger than its limit as~$q\rightarrow1$).
The~$q$-analogue of reduction of order then proves
\[H_{2k+2}=-\frac{q^{3k+2}(1-q)^2z^2}{(1+q^{k+1})^2(1-q^{2k+1})^2}H_{2k},\]
which concludes the proof, providing with a generalization of the expression for exp that is recovered by letting~$q\rightarrow1$.

Using the same steps leads to an automatic proof of Heine's~$q$-analogue of Gauss's continued fraction~\cite[19.2.1]{CuytPetersenVerdonkWaadelandJones2008}: the~$q$-hypergeometric series is defined by
\[{}_2\phi_1(a,b;c;q;z)=\sum_{n\ge0}{\frac{(a;q)_n(b;q)_n}{(c;q)_n}\frac{z^n}{(q;q)_n}},\]
where~$(a;q)_n$ is the~$q$-Pochhammer symbol \[(a;q)_n=(1-a)(1-aq)\dotsm(1-aq^{n-1}).\] Heine's continued fraction is
\begin{gather*}
\frac{{}_2\phi_1(a,b;c;q;z)}{{}_2\phi_1(a,bq;cq;q;z)}=1+\cf{m=1}{\infty}{a_mz}{1},\\
a_{2k+1}=\frac{(1-aq^k)(cq^k-b)q^k}{(1-cq^{2k})(1-cq^{2k+1})},\qquad
a_{2k}=\frac{(1-bq^k)(cq^k-a)q^{k-1}}{(1-cq^{2k-1})(1-cq^{2k})}.
\end{gather*}
We sketch the main steps of the computation. The~$q$-Riccati equation is
\begin{multline*}
(1-c)^2F(z)F(qz)+(1-c)(bz-c)F(qz)
\\+(1-z)(1-az)F(z)-z(a-1)(b-c)=0.
\end{multline*}
The sequence~$H_k$ of interest is therefore the numerator of the evaluation of this left-hand size at~$F(z)=P_{k}(z)/Q_{k}(z)$. The continued fraction being periodic of period~2, a recurrence for~$H_{2k}$ (or order 4) is computed. Reduction of order yields
\[\frac{H_{2k+2}}{z^2H_{2k}}=\frac{(1-aq^{k+1})(1-bq^{k+1})(a-cq^{k+1})(b-cq^{k+1})q^{2k+1}}
{(1-cq^{2k+1})(1-cq^{2k+2})},\]
which concludes the proof. This automates 2~more of the formulas in~\cite{CuytPetersenVerdonkWaadelandJones2008}.

\subsection{Difference Equations} 
The same method applies to difference equations. For instance, it results in one of the classical proofs~\cite[chap.~3]{Khrushchev2008} of Brouncker's continued fraction for
\[b(s):=\left(\frac{\Gamma\left(\frac{s+1}{4}\right)}{\Gamma\left(\frac{s+3}{4}\right)}\right)^2,\]
where~$\Gamma$ is Euler's Gamma function. Using the functional equation~$\Gamma(s+1)=s\Gamma(s)$, it follows that~$b(s)$ satisfies
\[b(s)b(s+2)={16}/{(s+1)^2}.\]
Looking for a formal power series solution in inverse powers of~$s$ (and nonnegative leading term) leads to a unique solution
$b(s)=4/s-2/{s^3}+\dotsb$
This is then converted into a continued fraction expansion with coefficients~$(a_k)$ given by
\[\frac{4}{s},\frac{1}{2s^2},\frac{9}{4s^2},\frac{25}{4s^2},\frac{49}{4s^2},\frac{81}{4s^2},\dotsc\]
from which it is easy to conjecture~$a_k=(2k-3)^2/(4s^2)$ for~$k\ge3$. The analogue of~$H_k$ in this context is
\[H_k=(s+1)^2P_k(s)P_k(s+2)-16Q_k(s)Q_k(s+2),\]
for which the same approach as above produces a linear recurrence of order~4
which is not sufficient to conclude that the valuations increase. From there, reduction of order with Algorithm~\ref{reducerecorder}
yields the shorter
\[H_{k+1}=-\frac{(2k+1)^2}{4s(s+2)}H_k,\quad k\ge1,\]
which exhibits the required increase of valuations.

The same technique has been applied to all the explicit C-fractions concerning the~$\psi$ function in~\cite{CuytPetersenVerdonkWaadelandJones2008}, thereby completing the experiment on this book.

\section{Conclusion}

In a simple and unified way, our approach to continued fractions recovers an  unexpectedly large number of explicit C-fractions from the literature. One miracle that takes place is that in all cases, the sequence of remainder polynomials turns out to be hypergeometric or $q$-hypergeometric. We are currently exploring this phenomenon in more detail.

\section*{Acknowledgements} This work was supported in part by the project FastRelax ANR-14-CE25-0018-01.

\bibliographystyle{abbrv}

\end{document}